\theoremstyle{definition}
\newtheorem{lemma}{Lemma}
\theoremstyle{remark}
\newcommand*{\mybox}[1]{%
  \framebox{\raisebox{0cm}[0.5\baselineskip][0.05\baselineskip]{%
    \hbox to 0.10cm {\hss#1\hss}}}\hspace{0.05cm}}
\begin{document}
\title{Verifying the First Nonzero Term: Physical ZKPs for ABC End View, Goishi Hiroi, and Toichika}
\author[1]{Suthee Ruangwises\thanks{\texttt{ruangwises@uec.ac.jp}}}
\affil[1]{Department of Informatics, The University of Electro-Communications, Tokyo, Japan}
\date{}
\maketitle

\begin{abstract}
In this paper, we propose a physical protocol to verify the first nonzero term of a sequence using a deck of cards. The protocol lets a prover show the value of the first nonzero term of a given sequence to a verifier without revealing which term it is. Our protocol uses $\Theta(1)$ shuffles, which is asymptotically lower than that of an existing protocol of Fukusawa and Manabe which uses $\Theta(n)$ shuffles, where $n$ is the length of the sequence. We also apply our protocol to construct zero-knowledge proof protocols for three well-known logic puzzles: ABC End View, Goishi Hiroi, and Toichika. These protocols enables a prover to physically show that he/she know solutions of the puzzles without revealing them.

\textbf{Keywords:} zero-knowledge proof, card-based cryptography, sequence, ABC End View, Goishi Hiroi, Toichika, puzzle
\end{abstract}

\section{Introduction}
First introduced by Goldwasser et al. \cite{zkp0} in 1989, a \textit{zero-knowledge proof (ZKP)} is an interactive protocol which lets a prover $P$ convince a verifier $V$ that some statement is correct without revealing any other information. A ZKP has to satisfy the following properties.
\begin{enumerate}
	\item \textbf{Completeness:} If the statement is true, then $V$ accepts with high probability. (In this paper, we consider only the \textit{perfect completeness} property where $V$ always accepts.)
	\item \textbf{Soundness:} If the statement is false, then $V$ rejects with high probability. (In this paper, we consider only the \textit{perfect soundness} property where $V$ always rejects.)
	\item \textbf{Zero-knowledge:} During the verification, $V$ gains no extra information other than the correctness of the statement.
\end{enumerate}

It has been proved that for any NP problem, there exists a computational ZKP for it \cite{zkp}. As a result, one can construct a computational ZKP for any NP problem via a reduction. Such construction, however, is not practical or intuitive. Instead, many researchers focus on using physical objects such as a deck of cards to develop ZKP protocols. These card-based protocols have benefits that they use only portable objects found in everyday life and do not require computers. These protocols also allow external observers to verify that the prover truthfully executes them (which is a difficult task for digital protocols). Furthermore, these protocol are easy to understand, even for non-expert, and thus can be used as examples for didactic purpose.

\subsection{Related Work}
Card-based ZKP protocols for several logic puzzles have been developed, including Sudoku \cite{sudoku0,sudoku2,sudoku}, Nonogram \cite{nonogram,nonogram2}, Akari \cite{akari}, Takuzu \cite{akari,takuzu}, Kakuro \cite{akari,kakuro}, KenKen \cite{akari}, Makaro \cite{makaro,makaro2}, Norinori \cite{norinori}, Slitherlink \cite{slitherlink}, Juosan \cite{takuzu}, Numberlink \cite{numberlink}, Suguru \cite{suguru}, Ripple Effect \cite{ripple}, Nurikabe \cite{nurikabe}, Hitori \cite{nurikabe}, Bridges \cite{bridges}, Masyu \cite{slitherlink}, Heyawake \cite{nurikabe}, Shikaku \cite{shikaku}, Usowan \cite{usowan}, Nurimisaki \cite{nurimisaki}, ABC End View \cite{abc,goishi}, Ball sort puzzle \cite{ball}, Moon-or-Sun \cite{moon}, Kurodoko \cite{nurimisaki}, Five Cells \cite{fivecells}, and Sumplete \cite{sumplete}.

These protocols are not only for recreational purpose, but they also have theoretical importance; they can physically verify many important sequence-related functions, including the ones shown below.
\begin{itemize}
	\item A subprotocol in \cite{makaro} can verify that a term of a sequence is the largest one in that sequence without revealing any term of the sequence.
	\item A subprotocol in \cite{sudoku} can verify that a sequence is a permutation of all given numbers without revealing their order.
	\item A subprotocol in \cite{numberlink} can count the number of terms of a sequence that are equal to a given secret value without revealing that secret value, which terms are equal to it, or any term of the sequence.
	\item A subprotocol in \cite{ripple}, when given a secret number $x$ and a sequence, can verify that $x$ does not appear among the first $x$ terms of the sequence without revealing $x$ or any term of the sequence.
\end{itemize}

\subsubsection{Protocol of Fukusawa and Manabe}
One challenging work is to develop a protocol to verify the first nonzero term of a sequence. Suppose a sequence $S=(x_1,x_2,...,x_n)$ of $n$ integers is known only to $P$. $P$ wants to show $V$ the value of the first nonzero term of $S$ without revealing which term it is, i.e. show the value of $x_i$ with the smallest index $i$ such that $x_i \neq 0$ without revealing $i$. Optionally, $P$ may also publicly edit the value of $x_i$.

In 2022, Fukasawa and Manabe \cite{abc} proposed such protocol in the context of developing a ZKP protocol for a pencil-and-paper logic puzzle called \textit{ABC End View}. However, their protocol involves computing multiple Boolean functions, making it very complicated. Another drawback of their protocol is that it encodes each term of the sequence by three cards (one numbered card and two binary cards: \mybox{$\clubsuit$}, \mybox{$\heartsuit$}), which is not optimal. The protocol also uses as many as $\Theta(n)$ shuffles, where $n$ is the length of the sequence, making it impractical to implement in real world.

\subsection{Our Contribution}
In this paper, we develop a simpler protocol to verify the first nonzero term of a sequence. Our protocol uses only $\Theta(1)$ shuffles and uses only one card to encode each term.

We apply our protocol to develop ZKP protocols for three well-known logic puzzles: ABC End View (the same as in \cite{abc} but with asymptotically better performance), \textit{Goishi Hiroi}, and \textit{Toichika}.

The major difference from the conference version of this paper \cite{goishi} is the addition of the ZKP protocol for Toichika and its proof of correctness and security in Section \ref{tck}.

\section{Preliminaries}
\subsection{Cards}
Each card used in our protocol has an integer written on the front side. All cards have indistinguishable back sides denoted by \mybox{?}.

For $1 \leq x \leq q$, define $E_q(x)$ to be a sequence of $q$ cards, with all of them being \mybox{0}s except the $x$-th leftmost card being a \mybox{1}, e.g. $E_4(2)$ is \mbox{\mybox{0}\mybox{1}\mybox{0}\mybox{0}}. Also, define $E_q(0)$ to be a sequence of $q$ consecutive \mbox{\mybox{0}s} and $E_q(q+1)$ to be a sequence of $q$ consecutive \mbox{\mybox{1}s}. In some operations, we may stack the cards in $E_q(x)$ into a single stack (with the leftmost card being the topmost card of the stack).

\subsection{Pile-Shifting Shuffle}
Given a matrix $M$ of cards, a \textit{pile-shifting shuffle} \cite{polygon} shifts the columns of $M$ by a uniformly random cyclic shift unknown to all parties (see Fig. \ref{fig1}). It can be implemented in real world by putting all cards in each column into an envelope, and letting all parties take turns to apply \textit{Hindu cuts} (taking some envelopes from the bottom and putting them on the top) to the pile of envelopes \cite{hindu}.

Note that each card in $M$ can be replaced by a stack of cards, and the protocol still works in the same way as long as every stack in the same row consists of the same number of cards.

\begin{figure}[H]
\centering
\begin{tikzpicture}
\node at (0,0.6) {\mybox{?}};
\node at (0.5,0.6) {\mybox{?}};
\node at (1,0.6) {\mybox{?}};
\node at (1.5,0.6) {\mybox{?}};
\node at (2,0.6) {\mybox{?}};

\node at (0,1.2) {\mybox{?}};
\node at (0.5,1.2) {\mybox{?}};
\node at (1,1.2) {\mybox{?}};
\node at (1.5,1.2) {\mybox{?}};
\node at (2,1.2) {\mybox{?}};

\node at (0,1.8) {\mybox{?}};
\node at (0.5,1.8) {\mybox{?}};
\node at (1,1.8) {\mybox{?}};
\node at (1.5,1.8) {\mybox{?}};
\node at (2,1.8) {\mybox{?}};

\node at (0,2.4) {\mybox{?}};
\node at (0.5,2.4) {\mybox{?}};
\node at (1,2.4) {\mybox{?}};
\node at (1.5,2.4) {\mybox{?}};
\node at (2,2.4) {\mybox{?}};

\node at (-0.4,0.6) {4};
\node at (-0.4,1.2) {3};
\node at (-0.4,1.8) {2};
\node at (-0.4,2.4) {1};

\node at (0,2.9) {1};
\node at (0.5,2.9) {2};
\node at (1,2.9) {3};
\node at (1.5,2.9) {4};
\node at (2,2.9) {5};

\node at (2.9,1.5) {\LARGE{$\Rightarrow$}};
\end{tikzpicture}
\begin{tikzpicture}
\node at (0,0.6) {\mybox{?}};
\node at (0.5,0.6) {\mybox{?}};
\node at (1,0.6) {\mybox{?}};
\node at (1.5,0.6) {\mybox{?}};
\node at (2,0.6) {\mybox{?}};

\node at (0,1.2) {\mybox{?}};
\node at (0.5,1.2) {\mybox{?}};
\node at (1,1.2) {\mybox{?}};
\node at (1.5,1.2) {\mybox{?}};
\node at (2,1.2) {\mybox{?}};

\node at (0,1.8) {\mybox{?}};
\node at (0.5,1.8) {\mybox{?}};
\node at (1,1.8) {\mybox{?}};
\node at (1.5,1.8) {\mybox{?}};
\node at (2,1.8) {\mybox{?}};

\node at (0,2.4) {\mybox{?}};
\node at (0.5,2.4) {\mybox{?}};
\node at (1,2.4) {\mybox{?}};
\node at (1.5,2.4) {\mybox{?}};
\node at (2,2.4) {\mybox{?}};

\node at (-0.4,0.6) {4};
\node at (-0.4,1.2) {3};
\node at (-0.4,1.8) {2};
\node at (-0.4,2.4) {1};

\node at (0,2.9) {3};
\node at (0.5,2.9) {4};
\node at (1,2.9) {5};
\node at (1.5,2.9) {1};
\node at (2,2.9) {2};
\end{tikzpicture}
\caption{An example of a pile-shifting shuffle on a $4 \times 5$ matrix}
\label{fig1}
\end{figure}

\subsection{Chosen Cut Protocol} \label{chosen}
Given a sequence $C = (c_1,c_2,...,c_q)$ of $q$ face-down cards, a \textit{chosen cut protocol} \cite{koch} lets $P$ select a desired card $c_i$ to use in other protocols without revealing $i$ to $V$. The protocol also preserves the original state of $C$ after $P$ finishes using $c_i$.

\begin{figure}[H]
\centering
\begin{tikzpicture}
\node at (0.0,2.4) {\mybox{?}};
\node at (0.6,2.4) {\mybox{?}};
\node at (1.2,2.4) {...};
\node at (1.8,2.4) {\mybox{?}};
\node at (2.4,2.4) {\mybox{?}};
\node at (3.0,2.4) {\mybox{?}};
\node at (3.6,2.4) {...};
\node at (4.2,2.4) {\mybox{?}};

\node at (0.0,2) {$c_1$};
\node at (0.6,2) {$c_2$};
\node at (1.8,2) {$c_{i-1}$};
\node at (2.4,2) {$c_i$};
\node at (3.0,2) {$c_{i+1}$};
\node at (4.2,2) {$c_q$};

\node at (0.0,1.4) {\mybox{?}};
\node at (0.6,1.4) {\mybox{?}};
\node at (1.2,1.4) {...};
\node at (1.8,1.4) {\mybox{?}};
\node at (2.4,1.4) {\mybox{?}};
\node at (3.0,1.4) {\mybox{?}};
\node at (3.6,1.4) {...};
\node at (4.2,1.4) {\mybox{?}};

\node at (0.0,1) {0};
\node at (0.6,1) {0};
\node at (1.8,1) {0};
\node at (2.4,1) {1};
\node at (3.0,1) {0};
\node at (4.2,1) {0};

\node at (0.0,0.4) {\mybox{1}};
\node at (0.6,0.4) {\mybox{0}};
\node at (1.2,0.4) {...};
\node at (1.8,0.4) {\mybox{0}};
\node at (2.4,0.4) {\mybox{0}};
\node at (3.0,0.4) {\mybox{0}};
\node at (3.6,0.4) {...};
\node at (4.2,0.4) {\mybox{0}};
\end{tikzpicture}
\caption{A $3 \times n$ matrix $M$ constructed in Step 1 of the chosen cut protocol}
\label{fig2}
\end{figure}

\begin{enumerate}
	\item Construct the following $3 \times q$ matrix $M$ (see Fig. \ref{fig2}).
	\begin{enumerate}
		\item In Row 1, place the sequence $C$.
		\item In Row 2, place a face-down sequence $E_q(i)$.
		\item In Row 3, place a face-up sequence $E_q(1)$.
	\end{enumerate}
	\item Turn over all face-up cards. Apply the pile-shifting shuffle to $M$.
	\item Turn over all cards in Row 2 and locate the position of the only \mybox{1}. A card in Row 1 directly above this \mybox{1} will be the card $c_i$ as desired.
	\item After finishing using $c_i$ in other protocols, place $c_i$ back into $M$ at the same position.
	\item Turn over all face-up cards. Apply the pile-shifting shuffle to $M$.
	\item Turn over all cards in Row 3 and locate the position of the only \mybox{1}. Shift the columns of $M$ cyclically such that this \mybox{1} moves to Column 1. $M$ is now reverted back to its original state.
\end{enumerate}

Note that each card in $C$ can be replaced by a stack of cards, and the protocol still works in the same way as long as every stack consists of the same number of cards.

\section{Verifying the First Nonzero Term of a Sequence} \label{nonzero}
Suppose a sequence $S=(x_1,x_2,...,x_n)$ of $n$ integers is known only to $P$. $P$ wants to show $V$ the value of the first nonzero term of $S$ without revealing which term it is, i.e. show the value of $x_i$ with the smallest index $i$ such that $x_i \neq 0$ without revealing $i$. Optionally, $P$ may also publicly edit the value of $x_i$. We develop the following \textit{FirstNonZero} protocol, which can solve this problem.

Suppose $x_\ell$ is the first nonzero term of $S$. The sequence $S$ is encoded by a sequence $A=(a_1,a_2,...,$ $a_n)$ of face-down cards, with each $a_i$ being an \mybox{$x_i$}. $P$ performs the following steps.
\begin{enumerate}
	\item Publicly append cards $b_1,b_2,...,b_{n-1}$, all of them being $\mybox{0}$s, to the left of $A$. Call this new sequence $C=(c_1,c_2,...,c_{2n-1}) = (b_1,b_2,...,b_{n-1},a_1,a_2,...,a_n)$.
	\item Turn over all face-up cards. Apply the chosen cut protocol to $C$ to choose the card $c_{\ell+n-1}=a_\ell$.
	\item Turn over $c_{\ell+n-1}$ to show that it is not a \mybox{0} (otherwise $V$ rejects). Also, as the chosen cut protocol preserves the cyclic order of $C$, turn over cards $c_\ell,c_{\ell+1},...,c_{\ell+n-2}$ to show that they are all \mybox{0}s (otherwise $V$ rejects). Now $V$ is convinced that the shown number on $c_{\ell+n-1}$ is the first nonzero term of $S$, i.e. $x_\ell \neq 0$ and $x_1=x_2=...=x_{\ell-1}=0$, without knowing $\ell$.
	\item Optionally, $P$ may publicly replace the card $c_{\ell+n-1}$ with another desired card.
	\item Continue the chosen cut protocol to the end. Then, remove cards $b_1,b_2,...,$ $b_{n-1}$. The sequence $A$ is now reverted back to its original state.
\end{enumerate}

This protocol uses $\Theta(n)$ cards and $\Theta(1)$ shuffles, in contrast to the protocol of Fukusawa and Manabe \cite{abc}, which uses $\Theta(n)$ cards and $\Theta(n)$ shuffles. Also, our protocol uses only a single card to encode each number, while their protocol uses three. This will affect the long-term performance when the protocol is repeatedly applied.

\section{ZKP Protocol for ABC End View}
ABC End View (also known as \textit{Easy as ABC}) is a pencil-and-paper logic puzzle. The origin of this puzzle is unclear, but it is a standard puzzle used in many competitions and appears on many websites. The puzzle consists of an $n \times n$ initially empty grid, with some letters written outside the edge of the grid. The objective of the puzzle is to fill letters from a given range (e.g. A, B, and C) into some cells in the grid according to the following rules (see Fig. \ref{fig3}).
\begin{enumerate}
	\item Every row and column must contain each letter exactly once.
	\item A letter outside the edge of the grid indicates the first letter in the corresponding row or column from that direction.
\end{enumerate}

\begin{figure}
\centering
\begin{tikzpicture}
\draw[step=0.8cm] (0,0) grid (4,4);

\node at (2.8,-0.2) {A};
\node at (3.6,-0.2) {A};
\node at (4.2,0.4) {B};
\node at (-0.2,1.2) {B};
\node at (-0.2,2.8) {A};
\node at (4.2,2.8) {C};
\node at (0.4,4.2) {B};
\node at (1.2,4.2) {C};
\end{tikzpicture}
\hspace{1.5cm}
\begin{tikzpicture}
\draw[step=0.8cm,color=black] (0,0) grid (4,4);

\node at (2.8,-0.2) {A};
\node at (3.6,-0.2) {A};
\node at (4.2,0.4) {B};
\node at (-0.2,1.2) {B};
\node at (-0.2,2.8) {A};
\node at (4.2,2.8) {C};
\node at (0.4,4.2) {B};
\node at (1.2,4.2) {C};

\node at (0.4,0.4) {C};
\node at (1.2,0.4) {A};
\node at (2.0,0.4) {B};
\node at (1.2,1.2) {B};
\node at (2.0,1.2) {C};
\node at (3.6,1.2) {A};
\node at (1.2,2.0) {C};
\node at (2.8,2.0) {A};
\node at (3.6,2.0) {B};
\node at (0.4,2.8) {A};
\node at (2.8,2.8) {B};
\node at (3.6,2.8) {C};
\node at (0.4,3.6) {B};
\node at (2.0,3.6) {A};
\node at (2.8,3.6) {C};
\end{tikzpicture}
\caption{An example of a $5 \times 5$ ABC End View puzzle with letters from the range A, B, and C (left) and its solution (right)}
\label{fig3}
\end{figure}

The construction of a ZKP protocol for ABC End View is very straightforward. We encode a letter A by a \mybox{1}, a letter B by a \mybox{2}, a letter C by a \mybox{3}, and so on. Also, we encode an empty cell by a \mybox{0}. Hence, we can directly apply the FirstNonZero protocol to verify the second rule for each letter outside the grid. To verify the first rule, we apply the following \textit{uniqueness verification protocol} for each row and column.

\subsection{Uniqueness Verification Protocol} \label{unique}
The uniqueness verification protocol \cite{sudoku} allows $P$ to convince $V$ that a sequence $\sigma$ of $q$ face-down cards is a permutation of cards $a_1,a_2,...,a_q$ in some order, without revealing the order. The protocol also preserves the original state of $\sigma$.

$P$ performs the following steps.

\begin{figure}[H]
\centering
\begin{tikzpicture}
\node at (-0.5,1.2) {$\sigma$:};
\node at (0.0,1.2) {\mybox{?}};
\node at (0.6,1.2) {\mybox{?}};
\node at (1.2,1.2) {...};
\node at (1.8,1.2) {\mybox{?}};

\node at (0.0,0.4) {\mybox{1}};
\node at (0.6,0.4) {\mybox{2}};
\node at (1.2,0.4) {...};
\node at (1.8,0.4) {\mybox{$q$}};
\end{tikzpicture}
\caption{A $2 \times q$ matrix constructed in Step 1}
\label{fig33}
\end{figure}

\begin{enumerate}
	\item Place face-up cards \mybox{1}, \mybox{2}, ..., \mybox{$q$} below the sequence $\sigma$ in this order from left to right to form a $2 \times q$ matrix of cards (see Fig \ref{fig33}).
	\item Turn over all face-up cards. Rearrange all columns of the matrix by a uniformly random permutation (which can be implemented in real world by putting the two cards in each column into an envelope and scrambling all envelopes together).
	\item Turn over all cards in Row 1 to show that they are a permutation of $a_1,a_2,...,a_q$ (otherwise, $V$ rejects).
	\item Turn over all face-up cards. Rearrange all columns of the matrix by a uniformly random permutation.
	\item Turn over all cards in Row 2. Rearrange the columns such that the cards in the bottom rows are \mybox{1}, \mybox{2}, ..., \mybox{$q$} in this order from left to right. The sequence $\sigma$ is now reverted back to its original state.
\end{enumerate}

Our ZKP protocol for ABC End View uses $\Theta(n^2)$ cards and $\Theta(n)$ shuffles, in contrast to the protocol of Fukusawa and Manabe \cite{abc}, which uses $\Theta(n^2)$ cards and $\Theta(n^2)$ shuffles. Also, as our protocol uses one card to encode each cell, the actual number of required cards is $n^2+\Theta(n)$, while their protocol uses three cards to encode each cell, resulting in the total of $3n^2+\Theta(n)$ cards.

The proof of correctness and security of this protocol is omitted as it is just a straightforward application of the FirstNonZero protocol and the uniqueness verification protocol, plus a similar proof was already provided in \cite{abc}.

\section{ZKP Protocol for Goishi Hiroi}
Goishi Hiroi (also known as \textit{Hiroimono}) is a logic puzzle first published by a Japanese company Nikoli, which is famous for publishing many popular logic puzzles such as Sudoku, Kakuro, and Akari. The puzzle consists of $m$ stones placed on an $n \times n$ grid, with at most one stone in each cell. The objective of the puzzle is to pick all stones in order according to the following rules \cite{nikoli}.
\begin{enumerate}
	\item The player can start picking at any stone.
	\item After picking a stone, the player must travel horizontally or vertically to the next stone.
	\item During the travel, if there is a stone on the path, the player must pick it. After that, the player may continue traveling in the same direction or turn left or right, but cannot turn backward (see Fig. \ref{fig4}).
\end{enumerate}

\begin{figure}
\centering
\begin{tikzpicture}
\draw[step=0.8cm] (0,0) grid (4.8,4.8);

\node[draw,circle,minimum size=0.7cm,fill={rgb:black,1;white,7}] at (0.4,0.4) {};
\node[draw,circle,minimum size=0.7cm,fill={rgb:black,1;white,7}] at (0.4,1.2) {};
\node[draw,circle,minimum size=0.7cm,fill={rgb:black,1;white,7}] at (0.4,4.4) {};
\node[draw,circle,minimum size=0.7cm,fill={rgb:black,1;white,7}] at (1.2,1.2) {};
\node[draw,circle,minimum size=0.7cm,fill={rgb:black,1;white,7}] at (1.2,2) {};
\node[draw,circle,minimum size=0.7cm,fill={rgb:black,1;white,7}] at (2.8,2.8) {};
\node[draw,circle,minimum size=0.7cm,fill={rgb:black,1;white,7}] at (2.8,3.6) {};
\node[draw,circle,minimum size=0.7cm,fill={rgb:black,1;white,7}] at (3.6,2.8) {};
\node[draw,circle,minimum size=0.7cm,fill={rgb:black,1;white,7}] at (4.4,0.4) {};
\node[draw,circle,minimum size=0.7cm,fill={rgb:black,1;white,7}] at (4.4,1.2) {};
\node[draw,circle,minimum size=0.7cm,fill={rgb:black,1;white,7}] at (4.4,2.8) {};
\node[draw,circle,minimum size=0.7cm,fill={rgb:black,1;white,7}] at (4.4,3.6) {};
\end{tikzpicture}
\hspace{1.2cm}
\begin{tikzpicture}
\draw[step=0.8cm] (0,0) grid (4.8,4.8);

\node[draw,circle,minimum size=0.7cm,fill={rgb:black,1;white,7}] at (0.4,0.4) {10};
\node[draw,circle,minimum size=0.7cm,fill={rgb:black,1;white,7}] at (0.4,1.2) {11};
\node[draw,circle,minimum size=0.7cm,fill={rgb:black,1;white,7}] at (0.4,4.4) {12};
\node[draw,circle,minimum size=0.7cm,fill={rgb:black,1;white,7}] at (1.2,1.2) {2};
\node[draw,circle,minimum size=0.7cm,fill={rgb:black,1;white,7}] at (1.2,2) {1};
\node[draw,circle,minimum size=0.7cm,fill={rgb:black,1;white,7}] at (2.8,2.8) {6};
\node[draw,circle,minimum size=0.7cm,fill={rgb:black,1;white,7}] at (2.8,3.6) {7};
\node[draw,circle,minimum size=0.7cm,fill={rgb:black,1;white,7}] at (3.6,2.8) {5};
\node[draw,circle,minimum size=0.7cm,fill={rgb:black,1;white,7}] at (4.4,0.4) {9};
\node[draw,circle,minimum size=0.7cm,fill={rgb:black,1;white,7}] at (4.4,1.2) {3};
\node[draw,circle,minimum size=0.7cm,fill={rgb:black,1;white,7}] at (4.4,2.8) {4};
\node[draw,circle,minimum size=0.7cm,fill={rgb:black,1;white,7}] at (4.4,3.6) {8};
\end{tikzpicture}
\caption{An example of a $6 \times 6$ Goishi Hiroi puzzle with 12 stones (left) and its solution with each number $i$ indicating the $i$-th stone that was picked (right)}
\label{fig4}
\end{figure}

The problem of deciding if a given Goishi Hiroi instance has a solution is known to be NP-complete \cite{np}.

We will apply the FirstNonZero protocol to develop a ZKP protocol for Goishi Hiroi.

\subsection{Idea of the Protocol}
First, we apply the chosen cut protocol to select a starting stone. Then, we take the $n-1$ stones on each path from the selected stone in the direction to the north, east, south, and west (we extend the grid by $n-1$ rows and $n-1$ columns to support this), and apply the chosen cut protocol again to select the travel direction. After that, we apply the FirstNonZero protocol to select the first stone on that path.

Note that we have to keep track of the direction we are currently traveling; its opposite direction will be the ``forbidden direction'' we cannot travel in the next move. Therefore, in each move, we also have to verify that the selected direction is not the forbidden direction.

\subsection{Setup}
$P$ publicly places a \mybox{1} on each cell with a stone and a \mybox{0} on each empty cell in the Goishi Hiroi grid. $P$ also publicly appends $n-1$ rows and $n-1$ columns of ``dummy cards'' $\mybox{3}$s to the bottom and to the right of the grid. Then, turn over all face-up cards. We now have a $(2n-1) \times (2n-1)$ matrix of cards (see Fig. \ref{fig44}).

\begin{figure}
\centering
\begin{tikzpicture}
\draw[step=0.8cm] (0,0) grid (2.4,2.4);

\node[draw,circle,minimum size=0.7cm,fill={rgb:black,1;white,7}] at (0.4,0.4) {};
\node[draw,circle,minimum size=0.7cm,fill={rgb:black,1;white,7}] at (0.4,1.2) {};
\node[draw,circle,minimum size=0.7cm,fill={rgb:black,1;white,7}] at (2,1.2) {};
\node[draw,circle,minimum size=0.7cm,fill={rgb:black,1;white,7}] at (2,2) {};

\node at (2,-0.2) {};
\end{tikzpicture}
\hspace{0.2cm}
\begin{tikzpicture}
\node at (0.0,1.2) {\mybox{0}};
\node at (0.6,1.2) {\mybox{0}};
\node at (1.2,1.2) {\mybox{1}};

\node at (0.0,0.6) {\mybox{1}};
\node at (0.6,0.6) {\mybox{0}};
\node at (1.2,0.6) {\mybox{1}};

\node at (0.0,0.0) {\mybox{1}};
\node at (0.6,0.0) {\mybox{0}};
\node at (1.2,0.0) {\mybox{0}};

\node at (3.0,-0.8) {};

\node at (2.2,0.6) {\LARGE{$\Rightarrow$}};
\node at (-1,0.6) {\LARGE{$\Rightarrow$}};
\end{tikzpicture}
\begin{tikzpicture}
\node at (1.2,2.4) {\mybox{0}};
\node at (1.8,2.4) {\mybox{0}};
\node at (2.4,2.4) {\mybox{1}};
\node at (3.0,2.4) {\mybox{3}};
\node at (3.6,2.4) {\mybox{3}};

\node at (1.2,1.8) {\mybox{1}};
\node at (1.8,1.8) {\mybox{0}};
\node at (2.4,1.8) {\mybox{1}};
\node at (3.0,1.8) {\mybox{3}};
\node at (3.6,1.8) {\mybox{3}};

\node at (1.2,1.2) {\mybox{1}};
\node at (1.8,1.2) {\mybox{0}};
\node at (2.4,1.2) {\mybox{0}};
\node at (3.0,1.2) {\mybox{3}};
\node at (3.6,1.2) {\mybox{3}};

\node at (1.2,0.6) {\mybox{3}};
\node at (1.8,0.6) {\mybox{3}};
\node at (2.4,0.6) {\mybox{3}};
\node at (3.0,0.6) {\mybox{3}};
\node at (3.6,0.6) {\mybox{3}};

\node at (1.2,0) {\mybox{3}};
\node at (1.8,0) {\mybox{3}};
\node at (2.4,0) {\mybox{3}};
\node at (3.0,0) {\mybox{3}};
\node at (3.6,0) {\mybox{3}};
\end{tikzpicture}
\caption{The way we place cards on a $3 \times 3$ Goishi Hiroi grid during the setup}
\label{fig44}
\end{figure}

Note that if we arrange all cards in the matrix into a single sequence $C=(c_1,c_2,...,$ $c_{(2n-1)^2})$, starting at the top-left corner and going from left to right in Row 1, then from left to right in Row 2, and so on, we can locate exactly where the four neighbors of any given card are. Namely, the cards on the neighbor to the north, east, south, and west of a cell containing $c_i$ are $c_{i-2n+1}$, $c_{i+1}$, $c_{i+2n-1}$, and $c_{i-1}$, respectively.

\subsection{Main Protocol}
To pick the first stone, $P$ performs the following steps.
\begin{enumerate}
	\item Apply the chosen cut protocol to select a card corresponding to the first stone.
	\item Turn over the selected card to show that it is a \mybox{1} (otherwise $V$ rejects). Replace it with a \mybox{2} and place it back to the grid.
\end{enumerate}

To pick the second stone, $P$ performs the following steps.
\begin{enumerate}
	\item Apply the chosen cut protocol to select a card corresponding to the first stone.
	\item Turn over the selected card to show that it is a \mybox{2} (otherwise $V$ rejects). Replace it with a \mybox{0}.
	\item Take the $n-1$ cards on a path from the selected card in the direction to the north. Let $A_1=(a_{(1,1)},a_{(1,2)},...,a_{(1,n-1)})$ be the sequence of these cards in this order from the nearest to the farthest. Analogously, let $A_2, A_3$, and $A_4$ be the sequences of the $n-1$ cards on a path from the selected card in the direction to the east, south, and west, respectively. Stack each sequence into a single stack.
	\item Place a \mybox{0}, called $a_{(i,0)}$, on top of $A_i$ for each $i=1,2,3,4$. Now we have $A_i=(a_{(i,0)},a_{(i,1)},...,a_{(i,n-1)})$ for $i=1,2,3,4$.
	\item Apply the chosen cut protocol to select a stack $A_\ell$ corresponding to the direction towards the second stone.
	\item Apply the FirstNonZero protocol to the sequence $(a_{(\ell,1)},a_{(\ell,2)},...,a_{(\ell,n-1)})$ to select a card corresponding to the second stone. $V$ verifies that it is a \mybox{1} (otherwise $V$ rejects). Replace the selected card with a \mybox{2}.
	\item Replace $a_{(\ell,0)}$ with a \mybox{1}.
	\item Retain cards $a_{(1,0)},a_{(2,0)},a_{(3,0)},a_{(4,0)}$ and place the rest of the cards back to the grid.
\end{enumerate}

To pick each $p$-th stone for $p \geq 3$, the steps are very similar to picking the second stone. The only additional step is that, after $P$ selects a direction, $P$ has to show $V$ that it is not a forbidden direction. The formal steps are as follows.
\begin{enumerate}
	\item Consider the cards $a_{(1,0)},a_{(2,0)},a_{(3,0)},a_{(4,0)}$ retained from the previous iteration. Swap $a_{(1,0)}$ and $a_{(3,0)}$. Swap $a_{(2,0)}$ and $a_{(4,0)}$. (The forbidden direction in this iteration is the one opposite to the travel direction in the previous iteration.)
	\item Apply the chosen cut protocol to select a card corresponding to the $(p-1)$-th stone.
	\item Turn over the selected card to show that it is a \mybox{2} (otherwise $V$ rejects). Replace it with a \mybox{0}.
	\item Take the $n-1$ cards on a path from the selected card in the direction to the north. Let $A_1=(a_{(1,1)},a_{(1,2)},...,a_{(1,n-1)})$ be the sequence of these cards in this order from the nearest to the farthest. Analogously, let $A_2, A_3$, and $A_4$ be the sequences of the $n-1$ cards on a path from the selected card in the direction to the east, south, and west, respectively. Stack each sequence into a single stack.
	\item Place the card $a_{(i,0)}$ on top of $A_i$ for each $i=1,2,3,4$. Now we have $A_i=(a_{(i,0)},a_{(i,1)},$ $...,a_{(i,n-1)})$ for $i=1,2,3,4$.
	\item Apply the chosen cut protocol to select a stack $A_\ell$ corresponding to the direction towards the $p$-th stone.
	\item Turn over the card $a_{(\ell,0)}$ to show that it is a \mybox{0} (otherwise $V$ rejects).
	\item Apply the FirstNonZero protocol to the sequence $(a_{(\ell,1)},a_{(\ell,2)},...,a_{(\ell,n-1)})$ to select a card corresponding to the $p$-th stone. $V$ verifies that it is a \mybox{1} (otherwise $V$ rejects). Replace the selected card with a \mybox{2}.
	\item Replace $a_{(\ell,0)}$ with a \mybox{1}. Also, replace each $a_{(i,0)}$ for $i \neq \ell$ with a \mybox{0}.
	\item Retain cards $a_{(1,0)},a_{(2,0)},a_{(3,0)},a_{(4,0)}$ and place the rest of the cards back to the grid.
\end{enumerate}

If the verification passes for all $p=3,4,...,m$, then $V$ accepts.

Our ZKP protocol for Goishi Hiroi uses $\Theta(n^2)$ cards and $\Theta(m)$ shuffles.

\subsection{Proof of Correctness and Security}
We will prove the perfect completeness, perfect soundness, and zero-knowledge properties of our protocol for Goishi Hiroi.

\begin{lemma}[Perfect Completeness] \label{lem1}
If $P$ knows a solution of the Goishi Hiroi puzzle, then $V$ always accepts.
\end{lemma}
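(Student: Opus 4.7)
The plan is to trace through each phase of the protocol, verifying that an honest prover who knows a valid solution never fails any of the checks. I would organize the argument around three scenarios: picking the first stone, picking the second stone, and picking the $p$-th stone for $p \geq 3$, since these are the three structurally different cases in the protocol.

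For the first stone, $P$ uses the chosen cut protocol to select the grid cell where the solution places the first stone. By the setup, this cell holds a \mybox{1}, so the reveal in Step 2 passes; $P$ then replaces it with a \mybox{2}. For the second stone, $P$ again selects the cell containing the first stone (now a \mybox{2}), so the \mybox{2}-check passes and it is replaced by \mybox{0}. The four directional stacks $A_1,\ldots,A_4$ collect exactly the $n-1$ cards lying on the paths north, east, south, and west from the current cell; cells inside the original grid carry \mybox{0} or \mybox{1}, while cells that lie in the extension outside the grid carry the dummy \mybox{3}. Since $P$'s solution picks the next stone by travelling in some valid direction, the first nonzero card on the stack $A_k$ in that direction is the \mybox{1} at the next stone (it appears before any dummy \mybox{3}, because the path stays on the board). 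Hence the FirstNonZero protocol outputs a \mybox{1}, passing the verification in Step 6.

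The $p \geq 3$ case adds only the forbidden-direction check. The key invariant I would establish by induction on $p$ is that, at the start of iteration $p$, after the swap in Step 1, the cards $a_{(1,0)},a_{(2,0)},a_{(3,0)},a_{(4,0)}$ encode a \mybox{1} on the direction opposite to the one travelled in iteration $p-1$ (the forbidden direction) and \mybox{0} on the other three. This follows because Step 9 of iteration $p-1$ installs a \mybox{1} on $a_{(k,0)}$ (the direction just travelled) and \mybox{0}s elsewhere, and the pair of swaps in Step 1 sends the \mybox{1} to the opposite direction. Since a valid Goishi Hiroi solution never reverses direction (Rule 3), the direction $P$ selects in Step 6 differs from the forbidden one, so the card $a_{(k,0)}$ revealed in Step 7 is a \mybox{0}. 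The remaining \mybox{2}-check and FirstNonZero check then pass by exactly the same reasoning as in the second-stone case.

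The main obstacle I anticipate is not any single verification but the bookkeeping around the dummy \mybox{3}s and the $a_{(i,0)}$ markers: I need to argue carefully that (i) the FirstNonZero call in Step 8 always terminates on an actual stone rather than on a dummy \mybox{3}, which reduces to the fact that a solution never directs $P$ off the board; and (ii) the invariant on $a_{(1,0)},\ldots,a_{(4,0)}$ survives the chosen cut protocol in Step 6, which is immediate because that protocol preserves the cyclic correspondence between labels and stacks. Once these two points are dispatched, combining them with the perfect completeness of the chosen cut and FirstNonZero subroutines shows that after $m$ successful iterations $V$ accepts.
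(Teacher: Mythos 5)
Your proposal is correct and follows essentially the same route as the paper: a case split on the first stone, the second stone, and the $p\ge 3$ iterations, checking that each reveal (the \mybox{2} at the previous stone, the \mybox{0} at $a_{(k,0)}$ after the swap, and the \mybox{1} found by FirstNonZero) succeeds for an honest prover. Your added care about the dummy \mybox{3}s never being reached and the invariant on $a_{(1,0)},\ldots,a_{(4,0)}$ is a slightly more explicit version of what the paper states in one line each, not a different argument.
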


\begin{proof}
Suppose $P$ knows a solution of the puzzle. Consider when $P$ picks the $p$-th stone ($p \geq 3$) from the grid.

At the beginning of Step 1, the only card $a_{(i,0)}$ that is a \mybox{1} is the one corresponding to the travel direction in the previous iteration. So, after $P$ swaps the cards, the only \mybox{1} will be the one corresponding to the opposite direction of the travel direction in the previous iteration.

In Step 3, the selected card was changed to \mybox{2} in the previous iteration, so this step will pass.

In Step 7, since the travel direction cannot be opposite to that of the previous iteration, the card must be a \mybox{0}, so this step will pass.

In Step 8, since the stone has not been picked before, the card must be a \mbox{\mybox{1},} so the verification will pass. Also, when invoking the FirstNonZero protocol, since the stone is the first one on the path, the protocol will also pass.

As this is true for every $p \geq 3$, and the case $p=2$ also works similarly, while the case $p=1$ is trivial, we can conclude that $V$ always accepts.
\end{proof}

\begin{lemma}[Perfect Soundness] \label{lem2}
If $P$ does not know a solution of the Goishi Hiroi puzzle, then $V$ always rejects.
\end{lemma}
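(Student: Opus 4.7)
The plan is to argue the contrapositive: suppose $V$ accepts the entire execution; I will then extract from the protocol transcript a sequence of $m$ grid positions and directions of travel, and show that this sequence is a legal Goishi Hiroi solution, contradicting the assumption that $P$ knows none. Since the chosen cut protocol and the FirstNonZero protocol have already been established as sound, I may treat every invocation of them as faithfully revealing exactly the card (respectively, the first nonzero value) it claims.

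The first step is to establish an invariant on the card configuration across iterations. Immediately after setup, every stone cell carries \mybox{1}, every empty grid cell carries \mybox{0}, and every padded cell carries \mybox{3}. I would prove by induction on $p$ that, just before iteration $p$ begins: (i) the card on every not-yet-picked stone is \mybox{1}, on every already-picked stone or empty cell is \mybox{0}, on the most recently picked stone is \mybox{2}, and on every padded cell is \mybox{3}; and (ii) among the retained cards $a_{(1,0)},a_{(2,0)},a_{(3,0)},a_{(4,0)}$ exactly one is a \mybox{1}, and it sits at the index of the direction travelled in iteration $p-1$. The inductive step uses the replacement \mybox{2}$\to$\mybox{0} in Step 3, the replacement \mybox{1}$\to$\mybox{2} in Step 8, and the public resetting of the four direction markers in Step 9.

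Given the invariant, I would check each Goishi Hiroi rule against what the accepted transcript forces. Rule 1 (free first choice) is immediate from iteration $p=1$. Rule 2 (horizontal or vertical travel) is enforced because Step 4 constructs only the four cardinal stacks $A_1,\ldots,A_4$, and Step 6 selects among exactly these. Rule 3 splits into two clauses: the ``pick the first stone on the chosen path'' clause follows directly from the FirstNonZero protocol, and crucially the dummy \mybox{3} padding prevents the protocol from ``running off'' the board, because otherwise FirstNonZero would expose a \mybox{3} and $V$ would reject; the ``no backward move'' clause follows from the swap $a_{(1,0)}\leftrightarrow a_{(3,0)}$ and $a_{(2,0)}\leftrightarrow a_{(4,0)}$ in Step 1, together with the Step 7 check that $a_{(k,0)}$ is \mybox{0}, which by the invariant forbids exactly the direction opposite to the previous one.

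The main obstacle I anticipate is carrying the invariant cleanly through the chosen-cut encapsulation in Steps 4--6, where the four direction markers are stacked on top of the four path sequences and then permuted by a chosen cut: I must verify that after the chosen cut reverts, each $a_{(i,0)}$ is restored to its labelled index $i$, so that the ``forbidden direction'' marker truly follows the logical direction and not some spatial artefact of the shuffled matrix. I also need to confirm that cards placed by $P$ in Step 9 come from a publicly visible bank (otherwise $P$ could tamper with the marker invariant and later break the backward-move check). Once this bookkeeping is pinned down, iterating the argument through all $p=1,\ldots,m$ yields a sequence of $m$ distinct picked stones obeying all three Goishi Hiroi rules, contradicting the hypothesis that no solution exists; hence $V$ must reject.
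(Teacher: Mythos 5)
Your proposal is correct and follows essentially the same route as the paper: argue the contrapositive, and show that the Step 3, Step 7, and Step 8 checks in each accepted iteration force a legal Goishi Hiroi move. The only difference is that you make explicit (as an induction invariant on the card configuration) what the paper leaves implicit in its remark that ``there is only one \mybox{2} in the grid,'' which is a reasonable tightening but not a different argument.
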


\begin{proof}
We will prove the contrapositive of the statement. Suppose that $V$ accepts, which means the verification passes for every iteration. Consider the $p$-th iteration ($p \geq 3$).

Since Step 3 passes, the card must be a \mybox{2}. As there is only one \mybox{2} in the grid, which is the card selected in the previous iteration, the move in this iteration must start from that cell.

Since Step 7 passes, the card must be a \mybox{0}, meaning that the current travel direction is not opposite to that of the previous iteration, satisfying the rule of the puzzle.

Since Step 8 passes, the card must be a \mybox{1}, meaning that there is a stone on the corresponding cell. Also, as the invoked FirstNonZero protocol also passes, the stone must be the first one on the path, satisfying the rule of the puzzle.

This means the $p$-th iteration corresponds to a valid move of picking a stone from the grid. As this is true for every $p \geq 3$, and the case $p=2$ also works similarly, while the case $p=1$ is trivial, we can conclude that $P$ must know a valid solution of the puzzle.
\end{proof}

\begin{lemma}[Zero-Knowledge] \label{lem3}
During the verification, $V$ obtains no information about $P$'s solution.
\end{lemma}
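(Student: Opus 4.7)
The plan is to follow the standard simulator paradigm for card-based zero-knowledge: exhibit an efficient simulator $S$ which, without knowing any solution, produces a transcript of $V$'s view whose distribution is identical to the distribution of views arising from an honest execution by $P$. Because the protocol only adds deterministic public preprocessing on top of well-understood primitives (pile-shifting shuffles, the chosen cut protocol, and the FirstNonZero protocol), I would first catalog every moment at which $V$ observes face-up cards and show that each such observation is distributed independently of the actual solution.

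First I would enumerate $V$'s observations per iteration: (i) the single \mybox{1} in Row 2 and later in Row 3 revealed inside each chosen cut invocation; (ii) the face-up value shown for the selected stone-cell card (always \mybox{1}, \mybox{2}, or \mybox{0} in the appropriate step, by completeness); (iii) the block of \mybox{0}s followed by a nonzero card revealed in Step 3 of each FirstNonZero invocation; and (iv) the value of $a_{(k,0)}$ revealed in Step 7 (always \mybox{0} when accepting). For each chosen cut invocation I would invoke the fact, already established in the literature and consistent with the pile-shifting shuffle, that the cyclic offset of the marker row is uniformly random over $\{0,1,\dots,q-1\}$, so the index at which $V$ sees the lone \mybox{1} carries no information about which card $P$ chose. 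For each FirstNonZero invocation, a similar argument shows that the starting position of the revealed zero-run is uniformly random over the length-$(2n-1)$ cyclic array and the revealed multiset of values is fully determined by the public choice of nonzero value; hence $V$ learns neither $k$ nor any other term.

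The simulator $S$ then mimics the control flow of the protocol: at each chosen cut it samples a uniformly random offset and writes the marker cards accordingly; at each FirstNonZero it samples a uniformly random cyclic offset and writes a run of face-up \mybox{0}s terminated by the canonical nonzero value dictated by the step (\mybox{1} for a new stone, \mybox{2} for the previously marked stone, etc.); and at every explicit reveal step it writes the canonical value forced by completeness (\mybox{1}, \mybox{2}, or \mybox{0}). Independence of the fresh shuffles across distinct invocations allows the joint distribution over all iterations to factor exactly as in the real execution, so the two transcripts coincide as probability distributions.

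The main obstacle I anticipate is the composition across the $m$ iterations, because the four direction-memory cards $a_{(1,0)},a_{(2,0)},a_{(3,0)},a_{(4,0)}$ persist in a face-down state between moves and carry nontrivial information (which direction was last traveled). I would handle this by arguing carefully that these cards are never face-up on the board: whenever one of them is revealed (only $a_{(k,0)}$ in Step 7), its value is forced to \mybox{0} by the accepting condition, and the remaining three stay face-down and are then overwritten in Step 9 with values that depend solely on the publicly observed direction $k$. Consequently the state these cards encode at the start of each iteration is a deterministic function of the public transcript $V$ has already seen, so the per-iteration simulator composes correctly into a global one and the zero-knowledge property follows.
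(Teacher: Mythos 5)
Your overall strategy---cataloguing every face-up observation, arguing each is either deterministic or uniformly distributed by a fresh pile-shifting shuffle, and composing per-iteration simulators---is the same simulator argument the paper gives, just in more detail. However, your final paragraph, the one place where you go beyond the paper, contains a genuine error: you assert that the three unrevealed direction-memory cards are ``overwritten in Step 9 with values that depend solely on the publicly observed direction $k$,'' so that their state is ``a deterministic function of the public transcript.'' The direction $k$ is \emph{not} publicly observed. The stack $A_k$ is selected in Step 6 via the chosen cut protocol precisely so that $V$ does not learn which of the four directions was chosen, and the reversion step of the chosen cut restores the stacks to their original positions without disclosing $k$. If $k$ were public, then together with the (publicly revealed) first stone the entire solution path would be reconstructible, and the lemma you are trying to prove would be false. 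The cards $a_{(1,0)},\dots,a_{(4,0)}$ therefore carry genuinely secret state between iterations---that is exactly why Step 1 of each iteration swaps them while face-down rather than revealing them.

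The composition claim you want is still true, but for a different reason: the hidden state never influences the \emph{distribution} of anything $V$ sees. The only one of the four memory cards ever turned over is $a_{(k,0)}$ in Step 7, it is always a \mybox{0} in an accepting run, and its location among the four shuffled stacks is uniform by the chosen cut's pile-shifting shuffle; the other three are overwritten while still face-down and while the stacks' identities are still hidden. So the simulator simply ignores the memory cards' contents and outputs the forced \mybox{0} at a uniformly random stack position. With that correction your argument closes; as written, the justification you give would actually contradict the statement being proved.
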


\begin{proof}
It is sufficient to show that all distributions of cards that are turned face-up can be simulated by a simulator $S$ that does not know $P$'s solution.
\begin{itemize}
	\item In Steps 3 and 6 of the chosen cut protocol in Section \ref{chosen}, due to the pile-shifting shuffle, a \mybox{1} has an equal probability to be at any of the $q$ positions. Hence, these steps can be simulated by $S$.
	\item In Step 3 of the FirstNonZero protocol in Section \ref{nonzero}, the cards $c_\ell,c_{\ell+1},...,$ $c_{\ell+n-2}$ are all \mybox{0}s, and the card $c_{\ell+n-1}$ is public information known to $V$. Hence, this step can be simulated by $S$.
	\item In the main protocol, there is only one deterministic pattern of the cards that are turned face-up, so the whole protocol can be simulated by $S$.
\end{itemize}

Therefore, we can conclude that $V$ obtains no information about $P$'s solution.
\end{proof}

\section{ZKP Protocol for Toichika} \label{tck}
Toichika is another pencil-and-paper logic puzzle developed by Nikoli. The puzzle consists of an $n \times n$ grid divided into polyominoes called \textit{regions}, with some cells already containing an arrow. The objective of the puzzle is to put arrows pointing in horizontal or vertical direction into cells according to the following rules (see Fig. \ref{fig5}).
\begin{enumerate}
	\item Each region contains exactly one arrow.
	\item Two arrows pointing towards each other with no other arrow between them are matched together; all arrows must be matched.
	\item The matched arrows cannot be in horizontally or vertically adjacent regions.
\end{enumerate}

\begin{figure}
\centering
\begin{tikzpicture}
\draw[step=0.8cm,color={rgb:black,1;white,4}] (0,0) grid (4.8,4.8);

\draw[line width=0.6mm] (0,0) -- (0,4.8);
\draw[line width=0.6mm] (4.8,0) -- (4.8,4.8);
\draw[line width=0.6mm] (0,0) -- (4.8,0);
\draw[line width=0.6mm] (0,4.8) -- (4.8,4.8);

\draw[line width=0.6mm] (0,0.8) -- (4,0.8);
\draw[line width=0.6mm] (0,1.6) -- (3.2,1.6);
\draw[line width=0.6mm] (4,1.6) -- (4.8,1.6);
\draw[line width=0.6mm] (0,2.4) -- (0.8,2.4);
\draw[line width=0.6mm] (3.2,2.4) -- (4.8,2.4);
\draw[line width=0.6mm] (0.8,3.2) -- (1.6,3.2);
\draw[line width=0.6mm] (2.4,3.2) -- (4,3.2);
\draw[line width=0.6mm] (0.8,4) -- (2.4,4);
\draw[line width=0.6mm] (3.2,4) -- (4.8,4);
\draw[line width=0.6mm] (0.8,2.4) -- (0.8,3.2);
\draw[line width=0.6mm] (0.8,4) -- (0.8,4.8);
\draw[line width=0.6mm] (1.6,0.8) -- (1.6,4);
\draw[line width=0.6mm] (2.4,3.2) -- (2.4,4.8);
\draw[line width=0.6mm] (3.2,0.8) -- (3.2,2.4);
\draw[line width=0.6mm] (3.2,4) -- (3.2,4.8);
\draw[line width=0.6mm] (4,0) -- (4,3.2);

\node[single arrow, draw=black, single arrow head extend=3pt, minimum height=5mm, rotate=90] at (2,0.4) {};
\node[single arrow, draw=black, single arrow head extend=3pt, minimum height=5mm, rotate=180] at (3.6,2) {};
\node[single arrow, draw=black, single arrow head extend=3pt, minimum height=5mm, rotate=180] at (3.6,3.6) {};
\node[single arrow, draw=black, single arrow head extend=3pt, minimum height=5mm, rotate=270] at (4.4,4.4) {};
\end{tikzpicture}
\hspace{1.5cm}
\begin{tikzpicture}
\draw[step=0.8cm,color={rgb:black,1;white,4}] (0,0) grid (4.8,4.8);

\draw[line width=0.6mm] (0,0) -- (0,4.8);
\draw[line width=0.6mm] (4.8,0) -- (4.8,4.8);
\draw[line width=0.6mm] (0,0) -- (4.8,0);
\draw[line width=0.6mm] (0,4.8) -- (4.8,4.8);

\draw[line width=0.6mm] (0,0.8) -- (4,0.8);
\draw[line width=0.6mm] (0,1.6) -- (3.2,1.6);
\draw[line width=0.6mm] (4,1.6) -- (4.8,1.6);
\draw[line width=0.6mm] (0,2.4) -- (0.8,2.4);
\draw[line width=0.6mm] (3.2,2.4) -- (4.8,2.4);
\draw[line width=0.6mm] (0.8,3.2) -- (1.6,3.2);
\draw[line width=0.6mm] (2.4,3.2) -- (4,3.2);
\draw[line width=0.6mm] (0.8,4) -- (2.4,4);
\draw[line width=0.6mm] (3.2,4) -- (4.8,4);
\draw[line width=0.6mm] (0.8,2.4) -- (0.8,3.2);
\draw[line width=0.6mm] (0.8,4) -- (0.8,4.8);
\draw[line width=0.6mm] (1.6,0.8) -- (1.6,4);
\draw[line width=0.6mm] (2.4,3.2) -- (2.4,4.8);
\draw[line width=0.6mm] (3.2,0.8) -- (3.2,2.4);
\draw[line width=0.6mm] (3.2,4) -- (3.2,4.8);
\draw[line width=0.6mm] (4,0) -- (4,3.2);

\node[single arrow, draw=black, single arrow head extend=3pt, minimum height=5mm, rotate=90] at (2,0.4) {};
\node[single arrow, draw=black, single arrow head extend=3pt, minimum height=5mm, rotate=180] at (3.6,2) {};
\node[single arrow, draw=black, single arrow head extend=3pt, minimum height=5mm, rotate=180] at (3.6,3.6) {};
\node[single arrow, draw=black, single arrow head extend=3pt, minimum height=5mm, rotate=270] at (4.4,4.4) {};
\node[single arrow, draw=black, single arrow head extend=3pt, minimum height=5mm, rotate=90] at (1.2,1.2) {};
\node[single arrow, draw=black, single arrow head extend=3pt, minimum height=5mm, rotate=0] at (2.8,1.2) {};
\node[single arrow, draw=black, single arrow head extend=3pt, minimum height=5mm, rotate=180] at (4.4,1.2) {};
\node[single arrow, draw=black, single arrow head extend=3pt, minimum height=5mm, rotate=0] at (0.4,2) {};
\node[single arrow, draw=black, single arrow head extend=3pt, minimum height=5mm, rotate=90] at (4.4,2) {};
\node[single arrow, draw=black, single arrow head extend=3pt, minimum height=5mm, rotate=270] at (2,2.8) {};
\node[single arrow, draw=black, single arrow head extend=3pt, minimum height=5mm, rotate=0] at (0.4,3.6) {};
\node[single arrow, draw=black, single arrow head extend=3pt, minimum height=5mm, rotate=270] at (1.2,4.4) {};
\end{tikzpicture}
\caption{An example of a Toichika puzzle (left) and its solution (right)}
\label{fig5}
\end{figure}

Deciding a solvability of a given Toichika instance has recently been proved to be NP-complete. \cite{toichika}.

We will apply the FirstNonZero protocol to develop a ZKP protocol for Toichika.

\subsection{Idea of the Protocol}
We encode each cell by two sequences of cards. The first one denotes the direction of an arrow in it (if any). The second one denotes the region it belongs to. The full details of the encoding will be explained in the next section.

To verify the second rule, we do it in a similar way to the ZKP protocol for Goishi Hiroi. Starting at an arrow, we take the $n-1$ stones in the direction to the north, east, south, and west, and apply the chosen cut protocol to select the direction that the arrow points to (we have to encode each arrow by four cards to support this). Then, we apply the FirstNonZero protocol to select the first arrow on the path, which is the matched arrow, and show that the directions of these two arrows are opposite. Note that we have to slightly modify the FirstNonZero protocol to support the new encoding.

With the new encoding, verifying the first rule is straightforward; we can apply the chosen cut protocol to select a cell with an arrow in each region.

To verify the third rule, we first create a public adjacency matrix of all pairs of regions. Then, we select an entry in the matrix corresponding to the two cells containing the matched arrows to show that they are in non-adjacent regions.

\subsection{Encoding}
We encode an up arrow, right arrow, down arrow, and left arrow by $E_4(1)$, $E_4(2)$, $E_4(3)$, and $E_4(4)$, respectively. We encode an arrow that has already been verified by $E_4(5)$. Also, we encode an empty cell with no arrow by $E_4(0)$. Let $R_1,R_2,...,R_k$ be the $k$ regions of the Toichika grid. We encode a region $R_i$ by $E_k(i)$.

Each cell in region $R_i$ is encoded by a sequence of $k+4$ cards in the form $E_4(j) \circ E_k(i)$, where $j$ corresponds to the direction of an arrow in that cell (or an empty cell), and $\circ$ denotes concatenation of two sequences.

\subsection{Showing that a Number is in $\{1,2,3,4\}$} \label{nz}
Suppose $P$ has a face-down sequence $E_4(i)$. $P$ wants to show $V$ that $1 \leq i \leq 4$, i.e. $i \neq 0$ and $i \neq 5$, without revealing the value of $i$. $P$ can do so by applying the uniqueness verification protocol in Section \ref{unique} to show that the sequence consists of three \mybox{0}s and one \mybox{1}. Note that this protocol preserves the original state of the sequence.

\subsection{Showing that Two Numbers are Equal} \label{eq}
Suppose $P$ has two face-down sequences $E_4(i)$ and $E_4(j)$ ($1 \leq i,j \leq 4$). $P$ wants to show $V$ that $i=j$ without revelaing the values of $i$ and $j$. $P$ can do so by performing the following steps.
\begin{enumerate}
	\item Construct a $2 \times 4$ matrix $M$ by placing $E_4(i)$ in Row 1 and $E_4(j)$ in Row 2.
	\item Apply the pile-shifting shuffle to $M$.
	\item Turn over all cards in $M$ to show that the \mybox{1}s in Row 1 and Row 2 are at the same column (otherwise $V$ rejects).
\end{enumerate}

Note that this protocol does not preserve the original states of the sequences.

\subsection{Modified FirstNonZero Protocol}
The modified protocol works exactly the same way as the original protocol in Section \ref{nonzero}, except that we use multiple cards to encode each term instead of one.

Suppose $P$ wants to show $V$ that $x_\ell$ is the first nonzero term of a sequence $S=(x_1,x_2,...,x_n)$, where $0 \leq x_i \leq 5$ for every $i$, and also wants to show that $1 \leq x_\ell \leq 4$ (in particular, $x_\ell \neq 5$). Optionally, $P$ may also publicly edit the value of $x_i$.

Each $x_i$ is encoded by a sequence $a_i$ of $k+4$ face-down cards, where the first four cards are $E_4(x_i)$, and the last $k$ cards are arbitrary cards. $P$ performs the following steps.
\begin{enumerate}
	\item Publicly append sequences $b_1,b_2,...,b_{n-1}$, all of them being $E_{k+4}(0)$s, to the left of $A$. Call this new sequence $C=(c_1,c_2,...,c_{2n-1}) = (b_1,b_2,...,b_{n-1},$ $a_1,a_2,...,a_n)$.
	\item Turn over all face-up cards. Apply the chosen cut protocol to $C$ to choose the sequence $c_{\ell+n-1}=a_\ell$.
	\item Show that the first four cards of $c_{\ell+n-1}$ is in the form $E_4(i)$ for some $1 \leq i \leq 4$, using the protocol in Section \ref{nz}.
	\item As the chosen cut protocol preserves the cyclic order of $C$, turn over the first four cards of sequences $c_\ell,c_{\ell+1},...,c_{\ell+n-2}$ to show that they are all $E_4(0)$s (otherwise $V$ rejects). Now $V$ is convinced that $x_\ell$ is the first nonzero term of $S$, and that $1 \leq x_\ell \leq 4$, without knowing $\ell$.
	\item Optionally, $P$ may publicly replace the sequence $c_{\ell+n-1}$ with another desired sequence.
	\item Continue the chosen cut protocol to the end. Then, remove sequences $b_1,b_2,...,$ $b_{n-1}$. The sequence $A$ is now reverted back to its original state.
\end{enumerate}

\subsection{Non-Adjacent Regions Verification} \label{adj}
At first, $P$ publicly constructs a $k \times k$ matrix $B$ of cards such that
$$B(i,j)=\begin{cases}
\boxed{2}, &\text{ if $i=j$;}\\
\boxed{1}, &\text{ if $i \neq j$ and $R_i$ is adjacent to $R_j$;}\\
\boxed{0}, &\text{ otherwise,}
\end{cases}$$
where $B(i,j)$ denotes the element in the $i$-th row and $j$-th column of $B$.

Suppose $P$ has two face-down sequences $E_k(i)$ and $E_k(j)$. $P$ wants to show $V$ that $B(i,j)$ is a \mybox{0}, i.e. regions $R_i$ and $R_j$ are not adjacent, without revealing $i$ and $j$. $P$ can do so by performing the following steps.
\begin{enumerate}
	\item Apply the chosen cut protocol to select the $i$-th row of $B$ (by placing $E_k(i)$ in Step 1(b)).
	\item Apply the chosen cut protocol again to select the $j$-th card in that row (by placing $E_k(j)$ in Step 1(b)).
	\item Turn over the selected card to show that it is a $\mybox{0}$ (otherwise $V$ rejects).
	\item Continue the chosen cut protocol to the end. $B$ is now reverted back to its original state.
\end{enumerate}

\subsection{Setup}
On every cell in each region $R_i$ of the Toichika grid, $P$ publicly places a sequence $E_k(i)$. For each cell already containing an arrow, $P$ publicly appends a sequence $E_4(j)$ (depending on the direction of the arrow) to the left of the sequence on the cell. For each initially empty cell, $P$ secretly appends a face-down sequence $E_4(j)$ (depending on the direction of the arrow or an empty cell, according to $P$'s solution) to the left of the sequence on the cell.

$P$ extends the grid by $n-1$ cells in all directions by publicly placing ``dummy sequences'' $E_{k+4}(0)$s aound the grid. Then, turn over all face-up cards. We now have a $(3n-2) \times (3n-2)$ matrix of cards.

$P$ also publicly constructs a $k \times k$ adjacenecy matrix $B$ as defined in Section \ref{adj}.

\subsection{Main Protocol}
To verify the first rule, $P$ performs the following steps for $k$ iterations, one for each region.
\begin{enumerate}
	\item Pick up the sequences on all cells in a region.
	\item Apply the chosen cut protocol to select a sequence on a cell with an arrow.
	\item Show that the first four cards of that sequence is in the form $E_4(i)$ for some $1 \leq i \leq 4$, using the protocol in Section \ref{nz}.
	\item Turn over the first four cards of all other sequences in that region to show that they are all $E_4(0)$s (otherwise $V$ rejects).
	\item Place all cards back to the grid.
\end{enumerate}

To verify the second and third rules, $P$ performs the following steps for $k/2$ iterations, one for each pair of matched arrows.
\begin{enumerate}
	\item Apply the chosen cut protocol to select a sequence on a cell with one of the two arrows. Call this sequence $c_1$.
	\item Show that the first four cards of $c_1$ is in the form $E_4(i)$ for some $1 \leq i \leq 4$, using the protocol in Section \ref{nz}.
	\item Take the $n-1$ cards on a path from $c_1$ in the direction to the north. Let $A_1=(a_{(1,1)},a_{(1,2)},...,a_{(1,n-1)})$ be the sequence of these cards in this order from the nearest to the farthest. Analogously, let $A_2, A_3$, and $A_4$ be the sequences of the $n-1$ cards on a path from $c_1$ in the direction to the east, south, and west, respectively. Stack each sequence into a single stack.
	\item Apply the chosen cut protocol to select a stack $A_i$, where $E_4(i)$ is the first four cards of $c_1$ (by placing $E_4(i)$ in Step 1(b)).
	\item Apply the modified FirstNonZero protocol to the sequence $A_i$ to select a sequence corresponding to the first arrow in that direction. Call this sequence $c_2$.
	\item Swap the first and third cards of $c_2$. Also, swap the second and fourth cards of $c_2$. Note that this reverses the direction of the arrow on the cell containing $c_2$.
	\item Show that the first four cards of $c_1$ and $c_2$ now encode the same number, i.e. the two arrows now point to the same direction, using the protocol in Section \ref{eq}.
	\item Replace the first four cards of $c_1$ and $c_2$ with $E_4(5)$s.
	\item Take the last $k$ cards of $c_1$ and $c_2$. Suppose they are $E_k(x)$ and $E_k(y)$, respectively.
	\item Apply the protocol in Section \ref{adj} to show that $B(x,y)$ is a \mybox{0}, i.e. $R_x$ and $R_y$ are not adjacent.
	\item Place all cards back to the grid.
\end{enumerate}

If the verification passes, then $V$ accepts.

Our ZKP protocol for Toichika uses $\Theta(kn^2)$ cards and $\Theta(k)$ shuffles.

\subsection{Proof of Correctness and Security}
We will prove the perfect completeness, perfect soundness, and zero-knowledge properties of our protocol for Toichika.

\begin{lemma}[Perfect Completeness] \label{lem4}
If $P$ knows a solution of the Toichika puzzle, then $V$ always accepts.
\end{lemma}

\begin{proof}
Suppose $P$ knows a solution of the puzzle.

Since each region contains exactly one arrow, the verification of the first rule will pass for every region.

Now consider the verification of the second and third rules of each pair of matched arrows.

Since both arrows have never been verified before, the first four cards on the cell containing each of them will be in the form $E_4(i)$ for some $1 \leq i \leq 4$, so Step 2 will pass.

Since there is no other arrow between the two arrows, the selected sequence in Step 5 will be on the cell containing the other arrow, and its first four cards must be the form $E_4(j)$ for some $1 \leq j \leq 4$. Therefore, Step 5 will pass. Also, as the two arrows point to opposite directions, after Step 6 they will point to the same direction, so Step 7 will pass.

Moreover, since the two arrows are in non-adjacent regions, Step 10 will pass.

Hence, we can conclude that $V$ always accepts.
\end{proof}

\begin{lemma}[Perfect Soundness] \label{lem5}
If $P$ does not know a solution of the Toichika puzzle, then $V$ always rejects.
\end{lemma}

\begin{proof}
We will prove the contrapositive of the statement. Suppose that $V$ accepts, which means the verification passes for every iteration.

As the verification of the first rule pass for every region, the first rule of the puzzle is satisfied.

Now consider each iteration of the verification of the second and third rules.

Since Step 2 passes, the first four cards of $c_1$ must be in the form $E_4(i)$ for some $1 \leq i \leq 4$, meaning that an arrow on that cell has never been verified before.

Since Step 5 passes, the first four cards of $c_2$ must be in the form $E_4(j)$ for some $1 \leq j \leq 4$, meaning that an arrow on that cell has never been verified before. Moreover, there is no arrow between the cells containing $c_1$ and $c_2$.

In Step 6, the direction of the arrow on the cell containing $c_2$ is reversed. Since Step 7 passes, the two arrows originally point towards each other, satisfying the second rule of the puzzle.

Since Step 9 passes, the two arrows are in non-adjacent regions, satisfying the third rule of the puzzle.

As this is true for all $k/2$ iterations, all $k$ arrows are matched accoring to the second and third rules of the puzzle. Hence, we can conclude that $P$ must know a valid solution of the puzzle.
\end{proof}

\begin{lemma}[Zero-Knowledge] \label{lem6}
During the verification, $V$ obtains no information about $P$'s solution.
\end{lemma}

\begin{proof}
It is sufficient to show that all distributions of cards that are turned face-up can be simulated by a simulator $S$ that does not know $P$'s solution. Note that the zero-knowledge property of the chosen cut protocol and the FirstNonZero protocol has been proved in the proof of Lemma \ref{lem3}.
\begin{itemize}
	\item In Steps 3 and 5 of the uniqueness verification protocol in Section \ref{unique}, due to the uniformly random permutation, the sequence has an equal probability to be any of the $q!$ permutations. Hence, these steps can be simulated by $S$.
	\item In Step 3 of the protocol in Section \ref{eq}, due to the pile-shifting shuffle, the two \mybox{1}s has an equal probability to be at any of the four columns. Hence, these steps can be simulated by $S$.
	\item In Step 3 of the protocol in Section \ref{adj}, the turned over card is a \mybox{0}, so it can be simulated by $S$.
	\item In the main protocol, there is only one deterministic pattern of the cards that are turned face-up, so the whole protocol can be simulated by $S$.
\end{itemize}

Therefore, we can conclude that $V$ obtains no information about $P$'s solution.
\end{proof}

\section{Future Work}
We proposed a new physical card-based protocol to verify the first nonzero term of a sequence, and applied it to develop ZKP protocols for three logic puzzles: ABC End View, Goishi Hiroi, and Toichika. A possible future work includes exploring methods to physically verify other sequence-related functions, and developing ZKP protocols for other logic puzzles.

\subsubsection*{Acknowledgement}
The author would like to thank Daiki Miyahara and Kyosuke Hatsugai for a valuable discussion on this research.

\end{document}